\newcommand{\kreis}[1]{\fcolorbox{white}{black!15}{$#1$\vphantom{1}}$\ \Longrightarrow$}
\newcommand{\rkreis}[1]{$\Longleftarrow$\ \fcolorbox{white}{black!15}{\vphantom{1}$#1$}}
\newcommand{\clset}{\psi}
\newcommand{\theConstdrei}{{\ensuremath 1.51426}}
\newcommand{\theConstvier}{{\ensuremath 1.60816}}
\newcommand{\struct}{\textsc{struct}}
\newtheorem{theorem}{Theorem}
\newtheorem{lem}{Lemma}
\begin{document}

\title{\LARGE\bf
Exploiting Independent Subformulas: A
Faster Approximation Scheme for \#\textit{k}-SAT}

\author{Manuel Schmitt \qquad Rolf Wanka\\[2mm]
Department of Computer Science\\
University of Erlangen-Nuremberg, Germany\\
{\{\tt manuel.schmitt, rolf.wanka\}@cs.fau.de}
}

\date{ }

\maketitle

\begin{abstract}
We present an improvement on Thurley's recent randomized approximation
scheme for $\#k$-SAT where the task is to count the number of
satisfying truth assignments of a Boolean function $\Phi$ given
as an $n$-variable $k$-CNF.
We introduce a novel way to identify independent substructures
of $\Phi$
and can therefore reduce
the size of the search space considerably.
Our randomized algorithm works for any $k$.
For $\#3$-SAT, it runs in
time $O(\varepsilon^{-2}\cdot\theConstdrei^n)$,
for $\#4$-SAT, it runs in time $O(\varepsilon^{-2}\cdot\theConstvier^n)$,
with error bound $\varepsilon$.
\end{abstract}

\noindent
\textbf{Keywords}:
Algorithms; analysis of algorithms; randomized
algorithms; \#$k$-SAT; satisfiability.


\section{Introduction}
\label{sec:intro}

\paragraph{Background.} The satisfiability problem (SAT) is one of the
classical and central problems in algorithm theory.
Its prominent role in Computer Science has even been compared~\cite{S:03}
to the one that Drosophila (the fruit fly) has in Genetics.
Given a Boolean formula~$\Phi$ in conjunctive normal
form (CNF) on $n$ variables with $m$ clauses,
it has to be determined whether there is a satisfying
assignment for~$\Phi$ (and in this case, to determine one) or not.
If every clause of $\Phi$ has length at most $k$, $\Phi$ is called
a $k$-CNF and the problem is dubbed $k$-SAT.
It is well known (for a comprehensive overview, see~\cite{GJ:79})
that $k$-SAT is NP-complete
for any $k\ge3$, and that it can be solved in time linear in the
input length for $k=2$~\cite{APT:79}.
So it is generally assumed that there is
no polynomial time algorithm solving $k$-SAT for $k\ge 3$.
In particular, $3$-SAT has attracted much attention
because of its ``borderline'' status.

There is a rich history of developing both deterministic and randomized
algorithms with running time $o(2^n)$ solving $k$-SAT.
The currently fastest deterministic algorithm
for $3$-SAT runs in time\footnote{In this context, the notion $O^*(.)$ is commonly used to suppress
factors that are of size $2^{o(n)}$.}
$O^*(1.3303^n)$~\cite{MTY:11},
the fastest randomized algorithm has
a running time of $O^*(\log(\delta^{-1})\cdot 1.30704^n)$~\cite{H:11}.
In the randomized setting, the use of $\delta$ means the following:
If $\Phi$ is not satisfiable, the algorithm returns the correct answer.
If $\Phi$ is satisfiable, it returns with probability
$1-\delta$ a satisfying assignment.
Table~\ref{tab:KnownRes} presents all best running times
currently known to solve $k$-SAT.

For many combinatorial problems including $k$-SAT, it is often not only
important to determine one solution (if it exists), but also
to determine 
the number of all different solutions.
A famous example from statistical physics is the computation of the
number of configurations in monomer-dimer systems 
(for an overview, see~\cite{JS:96}).
The complexity class that corresponds to these \emph{counting problems}
is $\#$P, and $\#$SAT, the problem to determine the number
of satisfying assignments, is well-known to be $\#$P-complete.
More exactly, let $\#k$-SAT denote the problem to determine $\#\Phi$,
i.\,e., for input $\Phi$ being a $k$-CNF,
the number of satisfying assignments.
Then, it is known~\cite{V:79} that
$\#k$-SAT is \#P-complete for $k\ge 2$.

\begin{table}\small
\centering
\caption{\label{tab:KnownRes}Previous and new results for $k$-SAT and $\#k$-SAT where the input $k$-CNF
has $n$ variables and $m$ clauses. The times are given in $O^*(.)$ notation.
$\beta_k$ is the base-$2$ logarithm of the base of the running time in column ``$k$-SAT rand''.
For definition of $\mu_k$, see Sec.~\ref{subsec:Thur};
$\psi_k$ is the largest root of $1-2z^k+z^{k+1}=0$;
$2^{1/(2-\beta_k)}>\alpha_k$.}

\medskip

\begin{tabular}{@{}l|lll|lll@{}}
 & \multicolumn{1}{c}{$k$-SAT} & \multicolumn{1}{c}{$k$-SAT} &  & \multicolumn{1}{c}{$\#k$-SAT} & \multicolumn{1}{c}{$\#k$-SAT}& \multicolumn{1}{c}{$\#k$-SAT} \\
 & \multicolumn{1}{c}{deterministic} & \multicolumn{1}{c}{rand} & \multicolumn{1}{c|}{$\beta_k$} & \multicolumn{1}{c}{exact} & \multicolumn{1}{c}{rand, prev.}& \multicolumn{1}{c}{this paper}\\ \hline
 $k=2$ & $n+m$\hfill\cite{APT:79} & \multicolumn{1}{c}{---} & \multicolumn{1}{c|}{---} & $1.2377^n$\hfill\cite{W:08} & \multicolumn{1}{c}{---}& \multicolumn{1}{c}{---} \\
 $k=3$ & $1.3303^n$\hfill\cite{MTY:11} & $1.30704^n$\hfill\cite{H:11} & $0.3864$ & $1.6423^n$\hfill\cite{K:07} & $1.5366^n$\hfill\cite{T:12} & $\theConstdrei^n$ \\
 $k=4$ & $1.5^n$\hfill\cite{MS:11} & $1.46899^n$\hfill\cite{H:11} & $0.5548$  & $1.9275^n$\hfill\cite{D:91,Z:96} & $1.6155^n$\hfill\cite{T:12} & $\theConstvier^n$ \\
 $k\ge 5$ & $\big(\frac{2\cdot(k-1)}{k}\big)^n$~\cite{MS:11} & $2^{(1-\mu_k/(k-1))\cdot n}$~\cite{PPSZ:05} & $1-\frac{\mu_k}{k-1}$ & $\psi_k^n$\hfill\cite{Z:96} & $2^{1/(2-\beta_k)\cdot n}$~\cite{T:12} & $\alpha_k^n$ (Sec.~\ref{subsec:HierIsses})
\end{tabular}

\end{table}

\paragraph{Topic of this work.}
In the area of combinatorial counting problems, there is also the
problem of approximating the wanted number.
In particular, there is the task to develop so-called \emph{randomized approximation schemes}
that receive as input $\Phi$ and
an arbitrarily small bound $\varepsilon$ on the
maximum admissible error and that compute with some fixed probability
greater $1/2$ an $\varepsilon$-estimate of $\#\Phi$
(for exact definitions, see Sec.~\ref{sec:prelim}).
In a recent paper, Thurley~\cite{T:12} presents
such a randomized approximation scheme for $\#k$-SAT
that has, for $k=3$,
running time $O^*(\varepsilon^{-2}\cdot 1.5366^n)$,
and for $k=4$, $O^*(\varepsilon^{-2}\cdot 1.6155^n)$.
A detailed description of Thurley's algorithm is presented
in Sec.~\ref{sec:prelim}.
Table~\ref{tab:KnownRes} also presents all best running times
currently known to solve $\#k$-SAT.

A different approach by Impagliazzo et al.~\cite{IMP:12}
leads to a randomized Las Vegas algorithm for $\#k$-SAT 
that always returns the exact solution and has expected
running time $O^*(2^{(1-1/(30k))n})$.
Note that for any $k$, Thurley's algorithm is faster than
this method.

\paragraph{New Results.}
We present a randomized approximation scheme
for $\#k$-SAT that takes the input $k$-CNF much more into
account than Thurley's algorithm.
In particular, we present a method that determines a large set
of maximal independent subformulas of $\Phi$.
I.\,e., the subformulas have no variables in common and
can therefore be treated independently.
As they are maximal, they convert the remaining clauses into clauses
of length $k-1$.
Hence, the search space is substantially reduced.
Our scheme,  which works for any $\#k$-SAT instance,
has for $\#3$-SAT running time $O(\varepsilon^{-2}\cdot\theConstdrei^n)$, and
for $\#4$-SAT, it works in time $O(\varepsilon^{-2}\cdot\theConstvier^n)$.
Note that our scheme is for all $k$ faster than Thurley's scheme.

\paragraph{Organization of Paper.}
In the next section, we define the necessary terms, and we give a
comprehensive description of Thurley's randomized approximation scheme.
In Sec.~\ref{sec:FirstImpr}, we present a first improvement that
exploits single clauses.
Generalizing this approach and building upon each other, we present further improvements
based on large sets of maximal independent clauses (Sec.~\ref{sec:NewRas}),
and on large sets of maximal independent subformulas (Sec.~\ref{sec:FinalRAS}).


\section{Elimination Trees, Monte Carlo Counting, and Thurley's Algorithm}
\label{sec:prelim}

Let $\Phi$ be a $k$-CNF, i.\,e., a Boolean function given in conjunctive normal
form with $n$ different variables $x_1,\ldots,x_n$ on $m$ different
clauses such that every clause has length at most $k$.
For an arbitrary Boolean formula $\phi$, let ${\rm Var}(\phi)$
denote the variables that occur in $\phi$.
Let $b:{\rm Var}(\phi)\to\{0,1\}$ be a \emph{partial}
assignment of truth values to the variables in $\phi$.
By $\phi_b$ we denote the formula we obtain from $\phi$ by
fixing in $\phi$ the variables according to $b$.

\newcommand{\hh}{\rule[-4pt]{0pt}{16pt}}
\begin{figure}
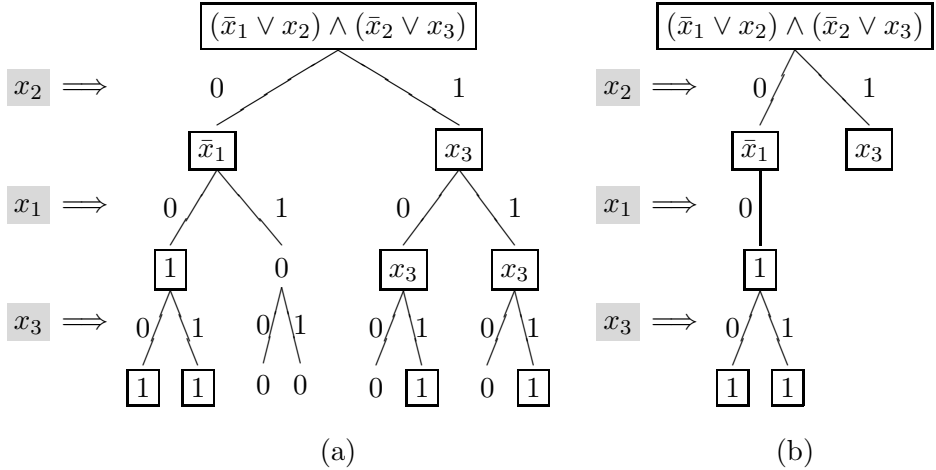

\centering
\scalebox{1}{
\begin{tabular}{@{}cc@{\qquad}cc@{}}
\mbox{%
\setlength{\GapDepth}{1cm}%
\setlength{\GapWidth}{4mm}%
\setlength{\EdgeLabelSep}{0.5cm}%
\drawwith{\dottedline[ ]{3}}%
\begin{bundle}{\hh}
  \chunk[\protect\kreis{x_2}\protect\vphantom{1}]{
    \begin{bundle}{\hh}
      \chunk[\protect\kreis{x_1}]{
          \begin{bundle}{\hh}
             \chunk[\protect\kreis{x_3}]{\hh}
          \end{bundle}
      }
    \end{bundle}
  }
\end{bundle}%
}
&
\mbox{\setlength{\GapDepth}{1cm}%
\setlength{\GapWidth}{3mm}%
\setlength{\EdgeLabelSep}{0.5cm}%
\begin{bundle}{\fbox{$(\bar x_1\lor x_2)\land (\bar x_2\lor x_3)$}\hh}
   \chunk[0]{
      \begin{bundle}{\fbox{$\bar x_1$} \hh}
         \chunk[0]{
            \begin{bundle}{\fbox{$1\vphantom{x_2}$}\hh}
              \chunk[0]{\fbox{$1$}\hh}
              \chunk[1]{\fbox{$1$}\hh}
             \end{bundle}
          }
         \chunk[1]{
            \begin{bundle}{$0\vphantom{x_2}$\hh}
              \chunk[0]{$0$\hh}
              \chunk[1]{$0$\hh}
             \end{bundle}
          }
      \end{bundle}
   }
    \chunk[1]{
        \begin{bundle}{\fbox{$x_3\vphantom{1}$}\hh}
            \chunk[0]{
                   \begin{bundle}{\fbox{$x_3\vphantom{1}$}\hh}
                     \chunk[0]{$0$\hh}
                     \chunk[1]{\fbox{$1$}\hh}
                   \end{bundle}
            }
            \chunk[1]{
                   \begin{bundle}{\fbox{$x_3\vphantom{1}$}\hh}
                     \chunk[0]{$0$\hh}
                     \chunk[1]{\fbox{$1$}\hh}
                   \end{bundle}
            }
        \end{bundle}
    }
\end{bundle}%
}
&
\mbox{%
\setlength{\GapDepth}{1cm}%
\setlength{\GapWidth}{4mm}%
\setlength{\EdgeLabelSep}{0.5cm}%
\drawwith{\dottedline[ ]{3}}%
\begin{bundle}{\hh}
  \chunk[\protect\kreis{x_2}\protect\vphantom{1}]{
    \begin{bundle}{\hh}
      \chunk[\protect\kreis{x_1}]{
          \begin{bundle}{\hh}
             \chunk[\protect\kreis{x_3}]{\hh}
          \end{bundle}
      }
    \end{bundle}
  }
\end{bundle}%
}
&
\mbox{\setlength{\GapDepth}{1cm}%
\setlength{\GapWidth}{3mm}%
\setlength{\EdgeLabelSep}{0.5cm}%
\begin{bundle}{\fbox{$(\bar x_1\lor x_2)\land (\bar x_2\lor x_3)$}\hh}
   \chunk[0]{
      \begin{bundle}{\fbox{$\bar x_1$} \hh}
         \chunk[0\quad]{
            \begin{bundle}{\fbox{$1\vphantom{x_2}$}\hh}
              \chunk[0]{\fbox{$1$}\hh}
              \chunk[1]{\fbox{$1$}\hh}
             \end{bundle}
          }
      \end{bundle}
   }
    \chunk[1]{\fbox{$x_3\vphantom{1}$}\hh}
\end{bundle}%
}\\
& (a) &  & (b)\rule[0pt]{0pt}{7mm}
\end{tabular}
}
\caption{\label{fig:CT}(a) Elimination tree and (b) a $3$-cut for $\Phi=(\bar x_1\lor x_2)\land (\bar x_2\lor x_3)$,
due to the elimination order $(x_2,x_1,x_3)$.
The sum of the leaves is $\#\Phi=4$. Satisfiable  nodes are boxed. Note that $\#\Phi$ can already be computed from the
nodes on level $1$.}
\end{figure}

There is a nice interpretation of $\#k$-SAT in terms of complete
binary trees of height $n$ (i.\,e., having levels $0,\ldots,n$)
that is sometimes used in the context of counting.
An \emph{elimination tree} for a $k$-CNF $\Phi$ can be defined
as follows.
Fix an \emph{elimination order} $(y_1,\ldots,y_n)$ of the variables.
Every node $\phi$ of the tree corresponds to a Boolean formula.
The root (on level~$0$) of the tree is $\Phi$.
Every node $\phi$ on level $i$, $0\le i< n$, has two children:
One child is $\phi_{y_i=0}$, 
the other one is $\phi_{y_i=1}$. 
So a path from the root to a leaf corresponds to
an assignment to the variables, and the
formula at a leaf is either $0$ or $1$.
$\#\Phi$ is the number of leaves marked~$1$.
The mark $1$ is additionally broadcast to all internal
nodes on a path from a $1$-leaf to the root.
I.\,e., it is visible on every node $\phi$ whether $\phi$
is satisfiable or not.
For a small example, see Fig.~\ref{fig:CT}(a).

Let $\ell$ be a positive integer.
An $\ell$-cut of an elimination tree is
an arbitrary connected subtree that contains the root,
only $1$-nodes, and has $\ell$ leaves (w.\,r.\,t. the subtree).
For an example, see Fig.~\ref{fig:CT}(b).
An $\ell$-cut contains at most $n\cdot\ell$ nodes.
From determining an $\ell$-cut, immediately $\#\Phi\ge\ell$ follows.
Note that the elimination order significantly influences the moment
when in the elimination tree the number of satisfying assignments
can be determined.

Let $\varepsilon>0$ be an arbitrarily small number.
$\varepsilon$ is the upper bound on the admissible relative error.
A number $L$ is called an $\varepsilon$-approximation of $\#\Phi$
if $(1-\varepsilon)\cdot\#\Phi\le L\le(1+\varepsilon)\cdot\#\Phi$.
A \emph{randomized approximation scheme} (RAS)
$A$ is a randomized algorithm that computes
on inputs $\Phi$ and error $\varepsilon$ a number $A(\Phi,\varepsilon)$
such that
$
\Pr[ A(\Phi,\varepsilon)\text{ is an $\varepsilon$-approximation}]\ge \frac34
$. 
Note that by the \emph{median of means} method,
the probability can be boosted to any number $1-\delta$, for
$\delta$ being arbitrarily close to $0$.
Algorithm $A$ which usually outputs a mean
is repeated $R=\Theta(\log \delta^{-1})$ times,
and the median of the $R$ values computed is returned.

\subsection{Monte Carlo Counting}
\label{subsec:MC}

A very simple, general Monte Carlo approach for counting
works as follows.
Let $\cal S$ be the set whose cardinality has to be computed,
and let ${\cal U}\supseteq {\cal S}$ be a superset of $\cal S$
such that $|{\cal U}|$ can be computed exactly and, preferably, fast.
Sample independently and uniformly at random
$T$ elements from $\cal U$, and let $L$ be the number of elements
from $\cal S$ among these $T$ samples.
Return the number $\frac L T\cdot |{\cal U}|$
as an approximation of $|\cal S|$.
Standard probability theory (e.\,g.,
see~\cite[p.\,311]{MR:95}) 
gives
that a sample size of
$
T=\Theta(\varepsilon^{-2}\cdot |{\cal U}|/|{\cal S}|)
$
suffices to ensure 
the RAS property.

For $\#k$-SAT, ${\cal U}=\{0,1\}^n$ may be chosen.
If somehow a lower bound $\ell$ on $\#\Phi$
is known,
the Monte Carlo approach immediately gives a RAS with running time
$T_{\rm MC}=O(\varepsilon^{-2}\cdot 2^n/\ell\cdot (n+km))=O^*(\varepsilon^{-2}\cdot 2^n/\ell)$.
If $\ell$ (or $\#\Phi$) is small, this is unfortunately an unsatisfactory upper bound.
We refer to this algorithm as MC$(\Phi,\ell,\varepsilon)$.
If the reliability is amplified to $1-\delta$ by the median of means method,
the running time
is $O^*(\log(\delta^{-1})\cdot\varepsilon^{-2}\cdot 2^n/\ell)$, and
we write MC$(\Phi,\ell,\varepsilon,\delta)$.

Note that for the similar problem $\#$DNF where a
Boolean formula $\Phi$ in disjunctive normal form is given,
a set $\cal U$ can be devised~\cite{KLM:89} with
$|{\cal U}| \le~m\cdot~\#\Phi$
yielding a RAS with polynomial running time
$O(\varepsilon^{-2}\cdot m\cdot (n+km))=O^*(\varepsilon^{-2})$.

\subsection{Thurley's RAS}
\label{subsec:Thur}

The running time of MC is decreasing in $\ell$.
Therefore, Thurley presented an algorithm
that, for input $\Phi$ and $\ell$, determines
whether there are at least $\ell$ satisfying
assignments.
This time the running time is increasing in $\ell$.
In a last step, $\ell$ is chosen such that it balances
the running times of MC and Thurley's approach.
In the following, we explain this method in detail because
it is also the starting point for our improvements.

Let $\beta_k$ denote the smallest known constant
such that there is a randomized algorithm solving
$k$-SAT in time $O^*(2^{\beta_k\cdot n})$.
Hence, $\beta_3\approx 0.3864$,
$\beta_4\approx 0.5548$,
and, for $k\ge5$, $\beta_k=1-\mu_k/(k-1)$,
where
$$\mu_k=\sum_{j=1}^{\infty}\frac 1{j\cdot(j+\frac1{k-1})}$$
is the constant involved in the PPSZ algorithm~\cite{PPSZ:05}.


Thurley's RAS works as follows:
It has as input a $k$-CNF $\Phi$ and a bound $\ell$.
It computes an $\ell$-cut of the elimination tree
(if it exists).
Whether a node is a $1$-node can be checked fast
with the known randomized
$k$-SAT decision algorithms mentioned above where
$\delta^{-1}=2^{\Theta(n^c)}$, for some constant
$c\ge 2$. We call this the $\ell$-cut phase.

If it cannot find an $\ell$-cut, it reports the number of
subtree leaves it actually has found in the cut as estimation on $\#\Phi$
(in fact, this number is even the correct value, w.\,h.\,p.).
If, on the other hand, $\ell$ subtree leaves have been determined,
this means that a lower bound of $\ell$
on $\#\Phi$ has been determined.
In this case, the Monte Carlo algorithm MC$(\Phi,\ell,\varepsilon)$
is executed to approximate $\#\Phi$ with error $\varepsilon$.

Since in the worst case $2^i$ nodes (formulas) are on level $i$ and $i$ variables
have already been fixed,
the running time of the $\ell$-cut phase is
$$
T_{\rm cut} =
O^*\left(\sum_{i=0}^{\log (n\cdot\ell)} 2^i\cdot (2^{\beta_k})^{n-i}\right)
=
O^*\left(
2^{\beta_k\cdot n}\cdot\sum_{i=0}^{\log(n\cdot\ell)} (2^{1-\beta_k})^i
\right)
=
O^*(2^{\beta_k\cdot n}\cdot \ell^{1-\beta_k})
\enspace.
$$
Hence, the overall running time is at most
$T_{\rm cut}+T_{\rm MC}
=O^*(2^{\beta_k\cdot n}\cdot \ell^{1-\beta_k}+\varepsilon^{-2}\cdot 2^n/\ell)$
which becomes
$O^*\big(\varepsilon^{-2}\cdot\big(2^{1/(2-\beta_k)}\big)^n\big)$
when $\ell = 2^{n\cdot (1-\beta_k)/(2-\beta_k)}$.
For $k=3$, this is 
$O(\varepsilon^{-2}\cdot 1.5366^n)$,
and, for $k=4$, it is $O(\varepsilon^{-2}\cdot 1.6155^n)$.


The whole algorithm in both phases
does not exploit the actual structure of $\Phi$
which opens the possibility for improvements.


\section{Pruning the Tree: Taking Single Clauses into Account}
\label{sec:FirstImpr}

\begin{figure}
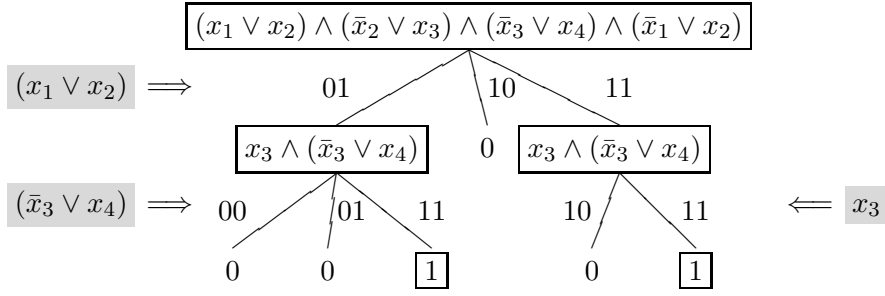
\centering
\scalebox{1}{
\begin{tabular}{ccc}
\qquad\qquad\qquad\mbox{%
\setlength{\GapDepth}{1cm}%
\setlength{\GapWidth}{4mm}%
\setlength{\EdgeLabelSep}{0.5cm}%
\drawwith{\dottedline[ ]{3}}%
\begin{bundle}{\hh}
  \chunk[\protect\kreis{(x_1\lor x_2)}\hspace*{2cm}\protect\vphantom{1}]{
    \begin{bundle}{\hh}
      \chunk[\protect\kreis{(\bar x_3\lor x_4)}\hspace*{2cm}]{\hh }
    \end{bundle}
  }
\end{bundle}%
}
&
\mbox{\setlength{\GapDepth}{1cm}%
\setlength{\GapWidth}{3mm}%
\setlength{\EdgeLabelSep}{0.5cm}%
\begin{bundle}{\fbox{$(x_1\lor x_2)\land (\bar x_2\lor x_3)\land (\bar x_3\lor x_4)\land (\bar x_1\lor x_2)$}\hh}
   \chunk[01]{
      \begin{bundle}{\fbox{$x_3\land (\bar x_3\lor x_4)$} \hh}
          \chunk[00]{$0$\hh}
          \chunk[\quad\,\, 01]{\qquad$0$\qquad\hh}
          \chunk[11]{\fbox{$1$}\hh}
      \end{bundle}
   }
   \chunk[\quad 10]{$0$\hh}
   \chunk[11]{
      \begin{bundle}{\fbox{$x_3\land (\bar x_3\lor x_4)$} \hh}
          \chunk[10\quad]{\qquad$0$\qquad\hh}
          \chunk[11]{\fbox{$1$}\hh}
      \end{bundle}
   }
\end{bundle}%
}
&
\mbox{%
\setlength{\GapDepth}{1cm}%
\setlength{\GapWidth}{4mm}%
\setlength{\EdgeLabelSep}{0.5cm}%
\drawwith{\dottedline[ ]{3}}%
\begin{bundle}{\hh}
  \chunk{
    \begin{bundle}{\hh}
      \chunk[\hspace*{2cm}\protect\rkreis{x_3}]{\hh }
    \end{bundle}
  }
\end{bundle}%
}
\end{tabular}
}
\caption{\label{fig:Cuts2vars}Possible pruned elimination tree
obtained by elimination w.\,r.\,t.
clause $(x_1\lor x_2)$ at the root and, therefore, chosen variables $x_1,x_2$,
and then on level 1, w.\,r.\,t. clause $(\bar x_3\lor x_4)$
on the left node, and clause $x_3$ on the right node.
For both nodes, the chosen variables are $x_3,x_4$.
Note that this tree has just $6$ leaves (the upper bound is $9$)
rather than $16$ leaves as in the
binary case from Sec.~\ref{sec:prelim}.
Also note that in general, on the same level different variables
at different nodes may be chosen.}
\end{figure}
The first possibility to slightly improve Thurley's RAS considers
the clauses that occur in the formulas $\phi$ that are associated
with the nodes of the elimination tree.
The following approach is also the basis for the further improvements
described in the subsequent sections.
If $C$ is a clause of~$\phi$ (in the following, we will write $C\in\phi$)
and consists of $\kappa$ ($\kappa\le k$)
literals, there is exactly one truth assignment to the $\kappa$ variables of $C$
such that $C$ is not satisfied.
So any assignment that incorporates this specific assignment
does not contribute to $\#\Phi$.

With this observation, it is possible to modify the
elimination tree as follows.
The root is the input $k$-CNF $\Phi$.
For a node's formula $\phi$,
choose one of its clauses, $C$, having $\kappa$ literals,
then choose $k$ variables including all variables from $C$,
and plug into the formula only those $2^k-2^{k-\kappa}$ ($<2^k$)
truth assignments that satisfy $C$.
For every formula obtained in this way, a new node is generated.
That means that we increase the degree of the elimination tree from $2$
to up to $2^k-1$ and reduce its height to $\lceil n/k\rceil$.
This tree is substantially smaller than the binary elimination
tree described in Sec.~\ref{sec:prelim}.
The total number of nodes is at most
$O((2^k-1)^{n/k})$. 
For $k=3$, this is $O(1.91294^n)$.
For a $2$-CNF example, see Fig.~\ref{fig:Cuts2vars}.

An $\ell$-cut of such an elimination tree
is now used in the cut phase of Thurley's RAS.
Now, the running time of the cut phase is
\begin{align}
T_{\rm Cut} &=
O^*\left(
\sum_{i=0}^{\log_{(2^k-1)}(n\cdot\ell)} (2^k-1)^i\cdot (2^{\beta_k})^{n-k\cdot i}
\right)\nonumber\\
&=
O^*(2^{\beta_k\cdot n}\cdot \ell^{1-\beta_k\cdot k/\log(2^k-1)})
\enspace.\label{eq:pruned}
\end{align}
If we use this pruned elimination tree to determine an $\ell$-cut, and then
(if necessary) using the Monte Carlo approach for computing the
estimate with error at most $\varepsilon$, we choose $\ell$ for
balancing the two phases as follows:
$$
\log\ell=\frac{1-\beta_k}{2-\beta_k\cdot\frac{k}{\log(2^k-1)}}\cdot n.
$$
We can state the result of this section with
$$
p_k = \frac{1-\beta_k\cdot\Big(\dfrac k {\log(2^k-1)}-1\Big)}
{2-\beta_k\cdot\dfrac k {\log(2^k-1)}},
$$

\begin{theorem}
Our pruned elimination tree algorithm is a RAS for $\#k$-SAT.
Its running time is $O(\varepsilon^{-2}\cdot (2^{p_k})^n)$.
\end{theorem}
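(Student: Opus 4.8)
The plan is to establish two things: that the pruned elimination tree algorithm satisfies the RAS property, and that its running time is as claimed. For correctness I would first argue that the pruned tree still computes $\#\Phi$ exactly at its leaves. At each node with formula $\phi$, the construction of Sec.~\ref{sec:FirstImpr} discards only the single falsifying assignment of the chosen clause $C$ (together with its free completions on the remaining $k-\kappa$ variables); since any assignment that falsifies $C$ also falsifies $\phi$, no satisfying assignment of $\Phi$ is ever removed along any root-to-leaf path. As each leaf fixes all $n$ variables, the $1$-leaves are in bijection with the satisfying assignments of $\Phi$, so their number is exactly $\#\Phi$.

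With this in place the correctness argument splits into two cases, exactly as in Thurley's RAS (Sec.~\ref{subsec:Thur}). If the cut phase fails to find an $\ell$-cut, it has exhausted the pruned tree and thus enumerated all satisfying leaves, so the reported count is exact. If it does find an $\ell$-cut, we have certified $\#\Phi\ge\ell$ and invoke MC$(\Phi,\ell,\varepsilon)$, which returns an $\varepsilon$-approximation with probability $\ge 3/4$ by the Monte Carlo analysis of Sec.~\ref{subsec:MC}. The point needing care is that the $1$-/$0$-status of each node is decided by a randomized $k$-SAT algorithm with failure probability $\delta=2^{-\Theta(n^c)}$, $c\ge 2$. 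Since the cut phase inspects at most $2^{O(n)}$ nodes, a union bound makes the probability that any status is wrong $o(1)$, so w.h.p.\ all decisions are correct; running MC at a constant confidence slightly above $3/4$ then keeps the overall success probability at least $3/4$.

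For the running time I would simply balance the two phases. The cut phase costs $T_{\rm Cut}=O^*(2^{\beta_k\cdot n}\cdot \ell^{1-\beta_k\cdot k/\log(2^k-1)})$ by \eqref{eq:pruned}, and MC costs $T_{\rm MC}=O^*(\varepsilon^{-2}\cdot 2^n/\ell)$ by Sec.~\ref{subsec:MC}. Writing $\gamma=\beta_k\cdot k/\log(2^k-1)$, setting $T_{\rm Cut}=T_{\rm MC}$ and solving for $\ell$ yields $\ell^{2-\gamma}=2^{(1-\beta_k)n}$, i.e.\ the stated choice $\log\ell=\frac{1-\beta_k}{2-\gamma}\cdot n$. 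Substituting this back into $T_{\rm MC}$ gives the exponent $1-\frac{1-\beta_k}{2-\gamma}=\frac{1-\gamma+\beta_k}{2-\gamma}$, and since $1-\gamma+\beta_k=1-\beta_k\big(\frac{k}{\log(2^k-1)}-1\big)$ this equals precisely $p_k$. The subpolynomial factors hidden in $O^*$ are absorbed into the slightly rounded base $2^{p_k}$, giving the claimed $O(\varepsilon^{-2}\cdot(2^{p_k})^n)$.

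I expect the running-time half to be routine algebra; the only genuine obstacle is the correctness bookkeeping, namely making sure the pruning preserves the exact count and that the exponentially many randomized node tests are controlled by a single union bound, so that the two failure sources (the SAT decisions and the Monte Carlo sampling) together still leave the success probability at least $3/4$.
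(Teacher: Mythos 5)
Your proposal is correct and follows essentially the same route as the paper: the paper's (implicit) proof is exactly the balancing of the cut-phase bound from Eq.~\eqref{eq:pruned} against $T_{\rm MC}=O^*(\varepsilon^{-2}\cdot 2^n/\ell)$ via the stated choice of $\log\ell$, and your algebra recovering $p_k$ matches it. You additionally spell out the correctness bookkeeping (pruning preserves the exact count; union bound over the $2^{O(n)}$ randomized node tests with failure probability $2^{-\Theta(n^c)}$, $c\ge 2$), which the paper inherits from Thurley's scheme without restating.
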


So for $k=3$, the running time is
$O(\varepsilon^{-2}\cdot 1.5298^n)$, and
for $k=4$, it is
$O(\varepsilon^{-2}\cdot 1.6122^n)$
which is already slightly better than the running time of
Thurley's RAS (see Table~\ref{tab:KnownRes}).

From this first improvement approach we already learn that
by inspecting $\Phi$ we can reduce the size of the elimination tree.
But we can see more.
If we can simultaneously find several clauses that do not share variables,
we even can easily exclude from the elimination tree
all assignments that do not satisfy these clauses.

As the tree computed in this section may be unbalanced and different
variable choices on the same level are possible,
to sample assignments uniformly with the help of the tree
may be not possible.
That means that this pruned tree approach presumably
cannot be used to also speed up the Monte Carlo approach,
except for the transition from level 0 to level 1, where
the mentioned problem does not occur.

In the next sections, we construct larger structures that reduce
the size of the elimination tree further \emph{and} that
allow for faster uniform sampling.


\section{Taking Sets of Independent Clauses into Account}
\label{sec:NewRas}

For a $k$-CNF $\Phi$, two clauses of $\Phi$ are called independent
if they have no variable in common.
A subformula $\clset\subseteq\Phi$ 
is called independent if the
clauses in $\clset$ are pairwise independent. $\clset$ is called maximal
if every clause in $\Phi$ shares at least one variable with a clause in
$\clset$.
Let $|\clset|$ denote the number of clauses in $\clset$.


\subsection{Speeding up the Monte Carlo counting}
\label{subsec:MCspeed}

In the running time $T_{\rm MC}$ of the Monte Carlo algorithm
from Sec.~\ref{subsec:MC}, the cardinality of ${\cal U}$
plays a very important role.
Suppose that an independent subformula $\clset$ 
has been somehow (see Sec.~\ref{subsec:Red} below) computed.
Exploiting $\clset$, 
one can use a significantly smaller set ${\cal U}_{\clset}$, namely
${\cal U}_{\clset} = \{b \in \{0,1\}^n\bigm| b \text{ satisfies } {\clset}\}$
which is obviously a superset of
${\cal S}_{\Phi}=\{b \in \{0,1\}^n\bigm| b \text{ satisfies }\Phi\}$.
As the clauses in $\clset$ are independent,
the size  of ${\cal U}_{\clset}$ can be bounded as follows:
$|{\cal U}_{\psi}|\le
(2^k-1)^{|\clset|}\cdot 2^{n-k\cdot|\clset|}=2^n\cdot (1-2^{-k})^{|\clset|}$.
Sampling assignments from ${\cal U}_{\clset}$ is simply done by
choosing u.\,a.\,r. one of the
at most $2^k-1$ satisfying assignments to the variables 
of each $C\in{\clset}$, and assigning $0$ or $1$ to each of the remaining variables.
Provided $\ell\le\#\Phi=|{\cal S}_{\Phi}|$, the running time is
$
T_{\rm MC}
=O^*(
\frac{2^n}{\varepsilon^{2}\cdot\ell}
\cdot (1-2^{-k})^{|\clset|})
$.
In terms of elimination trees, $\clset$ is used during the sampling
for going from the root to the nodes on level $1$.
The remaining sampling is performed as in the binary elimination tree case
described in Sec.~\ref{subsec:MC}.


\subsection{Controlling the decision between elimination and recursion}
\label{subsec:Red}

Until now, we assumed $\clset$ already available.
Now we present a method for either obtaining a sufficiently large number
of independent clauses or, if this is not possible, to transform
the $k$-CNF 
to a not too large number of $(k-1)$-CNFs.
This method has been introduced by Hofmeister et al.~\cite{HSSW:02,HSSW:07}
and works on inputs $\Phi$ and integer $\hat m$ as follows.
In particular, it also controls whether the algorithm from
Sec.~\ref{sec:FirstImpr} is applied or recursive calls on
nodes of the elimination tree:

Starting with $\clset=\emptyset$, greedily increase $\clset$ as
much as possible.
Note that $\clset$ is now a maximal independent subformula.

If $|\clset|\ge \hat m$, return $\clset$.

Otherwise, use $\clset$ to generate the at most
$(2^k-1)^{\hat{m}}$ different $(k-1)$-CNFs on
level 1 of the elimination tree,
\emph{recursively} solve $\#(k-1)$-SAT with these formulas as input,
return the sum of the estimations,
and report the overall counting task as finished.


We will refer to this method as
\textsc{Red}$(\Phi,\hat{m},\varepsilon,\delta)$. 
When \textsc{Red}$(\Phi,\hat{m},\varepsilon,\delta)$ has been executed
and returned $\clset$ with  $\clset\ge \hat m$, store $\clset$
and if the pruned elimination tree
algorithm from Sec.~\ref{sec:FirstImpr} reports the existence of at least $\ell$
satisfying assignments, use $\clset$ in the Monte Carlo algorithm from
Sec.~\ref{subsec:MCspeed}. Since
\textsc{Red}
makes random decisions when solving the $\#(k-1)$-SAT instances,
we specify in the call of \textsc{Red}
also $\delta$ such that
the probability that \textsc{Red}
returns either an $\varepsilon$-approximation or a set $\clset$ of at
least $\hat{m}$ independent clauses is at least $1-\delta$.

Together with the modified algorithm to determine an $\ell$-cut from
the previous section, we obtain an improved RAS.
Since the algorithm for solving $\#k$-SAT for $k\ge 4$ includes
the algorithm itself for solving the occurring $\#(k-1)$-SAT instances,
the runtime has no closed form, but can be calculated recursively.
We terminate the recursion for $k=2$ and use
Wahlstr\"om's (even deterministic) algorithm~\cite{W:08}, solving $\#2$-SAT in time
$O(1.2377^n)$.
%
Although our method works for any $k$, we
state our result for the cases $k\in\{3,4\}$ 
only.

{\addtolength\leftmargini{1em} 

\begin{theorem}
The algorithm described above is a RAS solving $\#k${\rm-SAT}.
For $k=3$, its running time is $O(\varepsilon^{-2}\cdot 1.5181^n)$,
and for $k=4$, its running time is $O(\varepsilon^{-2}\cdot(\log(\delta^{-1})+n)\cdot 1.6105^n)$.
\end{theorem}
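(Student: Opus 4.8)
The plan is to separate correctness (the RAS property) from the running-time bounds, and to carry out the whole argument by induction on $k$ with the free parameters written as $\hat m=\gamma n$ and $\ell=2^{\lambda n}$, to be fixed by balancing at the very end. The base case $k=2$ is handled by Wahlstr\"om's exact (hence always correct) algorithm, so no probability enters there. For the inductive step, \textsc{Red}$(\Phi,\hat m,\varepsilon,\delta)$ either returns a maximal independent $\clset$ with $|\clset|\ge\hat m$, or splits $\Phi$ along the fewer than $\hat m$ clauses of $\clset$ into at most $(2^k-1)^{\hat m}$ many $(k-1)$-CNFs on $n-k|\clset|$ variables. In the first case the pruned $\ell$-cut phase of Sec.~\ref{sec:FirstImpr} decides w.\,h.\,p.\ whether $\#\Phi\ge\ell$ from the underlying randomized $k$-SAT decider; when the $\ell$-cut certifies $\#\Phi\ge\ell$, the improved Monte Carlo step of Sec.~\ref{subsec:MCspeed}, sampling from $\mathcal U_\clset\supseteq\mathcal S_\Phi$, returns an $\varepsilon$-approximation with probability $\ge 3/4$ by the sample-size bound of Sec.~\ref{subsec:MC}. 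In the second case the estimate is the sum of the recursively computed $\#(k-1)$-SAT estimates, and an $\varepsilon$-approximation of every summand yields an $\varepsilon$-approximation of their sum. The one subtlety is that up to $(2^k-1)^{\hat m}$ recursive calls must succeed simultaneously, so I would drive each call's failure probability down to $\delta'=\delta\cdot(2^k-1)^{-\hat m}$ and close with a union bound; this is exactly the purpose of the explicit $\delta$ in the signature of \textsc{Red}.

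For the running time with $k=3$ there are three competing costs, and the claim is that $(\gamma,\lambda)$ can equalize their exponents. (i) The cut phase costs $O^*\big(2^{\beta_3 n}\,\ell^{\,1-\beta_3\cdot 3/\log 7}\big)$ by~\eqref{eq:pruned}. (ii) When $|\clset|\ge\hat m$, the improved Monte Carlo step costs $O^*\big(\varepsilon^{-2}\,2^n\ell^{-1}(1-2^{-3})^{\hat m}\big)$, which is worst at $|\clset|=\hat m$. (iii) When $|\clset|<\hat m$, \textsc{Red} recurses into at most $7^{|\clset|}$ instances of $\#2$-SAT on $n-3|\clset|$ variables, each solved in time $O(1.2377^{\,\cdot})$; since $7/1.2377^3>1$ this is worst at $|\clset|\approx\hat m$, giving $O^*\big(1.2377^{\,n}(7/1.2377^3)^{\hat m}\big)$. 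Writing each exponent as a linear function of $(\gamma,\lambda)$ and setting the three equal yields two equations in $(\gamma,\lambda)$; solving and substituting $\beta_3\approx0.3864$ gives $\gamma\approx0.156$, $\lambda\approx0.368$, and a common base of $2^{0.6022}\approx1.5181$. No reliability factor appears because the recursion bottoms out at a deterministic algorithm.

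For $k=4$ the recursion descends into the $\#3$-SAT algorithm itself, so the running time satisfies a recurrence rather than a closed form; I would set it up exactly as above, with term (iii) replaced by at most $15^{\hat m}$ recursive calls to the just-analyzed $\#3$-SAT routine on $n-4|\clset|$ variables, and balance the three terms again to obtain the base $1.6105$. The extra factor $\log(\delta^{-1})+n$ enters precisely here: forcing each of the up to $15^{\hat m}$ recursive calls to succeed with probability $1-\delta'$, $\delta'=\delta\,15^{-\hat m}$, costs a median-of-means overhead of $\Theta(\log\delta'^{-1})=\Theta(\log\delta^{-1}+\hat m)=\Theta(\log\delta^{-1}+n)$ per call, whereas in the $k=3$ case the deterministic base made this factor disappear.

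I expect the main obstacle to be the probability bookkeeping across the recursion: guaranteeing that the single global $3/4$ bound survives a union bound over exponentially many recursive sub-instances is what pins down the per-call reliability $\delta'$ and, for $k=4$, forces the $\log(\delta^{-1})+n$ overhead into the running time. The balancing itself is routine once all three exponents are expressed as (for $k=4$, recursively defined) linear functions of $\gamma$ and $\lambda$; the only care needed is to verify that the optimum is an interior point where all three terms are simultaneously tight, rather than one where a side constraint such as $\hat m\le n/k$ or $\ell\le\#\Phi$ becomes binding.
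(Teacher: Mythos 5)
Your proposal is correct and follows essentially the same route as the paper's proof: both balance the same three costs ($T_{\textsc{Red}}$ from the recursion into $\#(k-1)$-SAT with Wahlstr\"om's deterministic $\#2$-SAT algorithm as base case, $T_{\rm Cut}$ from Eq.~\eqref{eq:pruned}, and the improved Monte Carlo bound $T_{\rm MC}$), and both obtain the $(\log(\delta^{-1})+n)$ factor for $k=4$ by demanding per-call success probability of the nested $\#3$-SAT calls of the form $1-\delta/(2^k-1)^{\hat m}\ge 1-\delta/2^n$ plus a union bound, while the deterministic base case makes this factor vanish for $k=3$. Your balanced parameters $\gamma\approx 0.156$, $\lambda\approx 0.368$ and base $2^{0.6022}\approx 1.5181$ coincide with the paper's $\hat m = 0.1563\,n$ and $\ell = 1.2903^n = 2^{\approx 0.368 n}$.
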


} 

\begin{proof}
(a) Let $k=3$. For arbitrary $\hat{m}$, \textsc{Red} either
finds a set $\clset$ of $\hat{m}$ independent clauses or
it has to solve at most $7^{\hat{m}}$ $\#2$-SAT instances,
each over the remaining $n-3\hat{m}$ variables.
This is done (even deterministically) with Wahlstr\"om's algorithm~\cite{W:08}
in time $O(1.2377^{n-3\hat{m}})$.
So the overall running time of \textsc{Red}
is
$T_{\textsc{Red}}=O( (7\cdot1.2377^{-3})^{\hat m}\cdot1.2377^n)
=O(3.6920^{\hat{m}}\cdot 1.2377^n)$. 

As mentioned above, the running time of MC in case of
$|{\clset}|\ge \hat{m}$ is at most
$T_{\rm MC}=
O^*(
2^n/\ell \cdot \left(\frac 7 8)^{\hat{m}}
\right)$
for $\ell\le\#\Phi$, and the running time for finding an $\ell$-cut (or
being sure that none exists) is due to Eq.~\eqref{eq:pruned},
$T_{\rm cut} = O^*(2^{\beta_3\cdot n}\cdot \ell^{1-\beta_3\cdot 3/\log 7})$.
The break-even point for $T_{\textsc{Red}}$,
$T_{\rm Cut}$ and $T_{\rm MC}$
and therefore the worst case occurs for choosing
$\hat{m} = 0.1563\,n$ and 
$\ell = 1.2903^n$.

\medskip

(b) The proof for the case $k\ge 4$ is analogous.
The only difference is that \textsc{Red} is not deterministic
anymore.
In order to get an $\varepsilon$-approximation with probability
at least $1-\delta$ in the end, we require from the nested
$\#(k-1)$-SAT algorithm an $\varepsilon$-approximation with
probability at least $1-\delta/7^{\hat{m}}\le 1-\delta/2^n$,
so the necessary number of repetitions for each recursive call of the
RAS for
$\#(k-1)$-SAT has to be increased by $n$ (additively)
to ensure the desired success probability.

For the time bound for $\#4$-SAT, we now use
the bound from (a) for $\#3$-SAT and obtain
the claimed result by the same straight-forward calculations
as in the first case. In the worst case, the parameters are
$\hat{m} = 0.0587\,n$ and 
$\ell = 1.2372^n$.
\end{proof}


\section{Taking Large Independent Structures into Account}
\label{sec:FinalRAS}

\newcommand{\final}{closed}

One can easily see that both the modified version of the Monte Carlo
algorithm and the modified method for calculating an $\ell$-cut do not
require the elements of $\clset$  to be clauses.
Both  can be generalized for $\clset$ being a set of pairwise 
independent, arbitrary subformulas with constant size.
E.\,g., in $\psi=\sigma_1\land\sigma_2$ with
$\sigma_1=(x_1\lor x_2\lor \bar x_3)\land (\bar x_1\lor x_3\lor x_4)$
and $\sigma_2=(x_5\lor \bar x_6)\land (x_5\lor x_7)$,
$\sigma_1$ and $\sigma_2$ are independent.
In order to recognize the subformulas, we write (in general)
$\psi=\{\sigma_1,\ldots,\sigma_{\hat{m}}\}$.
We call a single subformula $\sigma$ a \struct{}. If every clause shares at least one variable with a clause in $\clset$, we call $\clset$ maximal.
Due to their constant size, it is possible to compute the
number of their satisfying assignments.
We first show how \struct{}s can be used to improve
the algorithms MC and \textsc{Cut}, then we show how to construct them,
and finally we present the overall RAS.

\subsection{Monte Carlo counting and cut phase if many independent subformulas are known}
\label{subsec:DasRAS}

The generalized version of MC is presented below as Algorithm~\ref{alg:MC},
the generalized version of the $\ell$-cut phase
as \textsc{Cut} (Algorithm \ref{alg:Cut}).
For a subformula $\sigma$, let $n_{\sigma}$ denote the number
of different variables in $\sigma$, and $L_{\sigma}$
the number of satisfying assignments of $\sigma$.
Note that \textsc{Cut} 
uses a global counter $L$ (only set to $0$ in the first call) and
globally aborts as soon as $L$ reaches $\ell$. 
Otherwise, it returns $L$ after finishing.

Our results from Sec.~\ref{sec:NewRas}
lead to the following lemmas.

\begin{lem}\label{MC}
Let $\clset=\{\sigma_1,\ldots,\sigma_{\hat{m}}\}$ be a set of
constant-sized, pairwise
independent subformulas of $\Phi$.
If $\#\Phi\ge\ell$, MC($\Phi$, $\clset$, $\ell$, $\varepsilon$, $\delta$)
returns with probability at least $1-\delta$
an $\varepsilon$-approximation for $\#\Phi$.
The running time is 
$$
T_{\rm MC}=
O^*\!\Big(\log(\delta^{-1})\cdot
\frac{2^n}{\varepsilon^{2}\cdot\ell}\cdot \prod_{\sigma\in{\clset}}\frac{L_{\sigma}}{2^{n_{\sigma}}}
\Big)\enspace.
$$
\end{lem}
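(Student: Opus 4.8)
The plan is to show that MC is an instance of the general Monte Carlo counting scheme from Sec.~\ref{subsec:MC}, applied with the sample space $\mathcal{U}_{\clset}=\{b\in\{0,1\}^n\mid b\text{ satisfies }\clset\}$ in place of $\{0,1\}^n$. Since the subformulas $\sigma_1,\ldots,\sigma_{\hat m}$ are pairwise independent (no shared variables), a uniform sample from $\mathcal{U}_{\clset}$ is obtained by independently choosing, for each $\sigma\in\clset$, one of its $L_\sigma$ satisfying assignments uniformly at random, and assigning a fair coin flip to each of the $n-\sum_\sigma n_\sigma$ remaining variables. This is exactly what the sampling step of Algorithm~\ref{alg:MC} does, so the samples are genuinely uniform over $\mathcal{U}_{\clset}$, and each such sample can be drawn and tested for membership in $\mathcal{S}_\Phi$ in time $O^*(1)$ (the subformulas have constant size, so their satisfying assignments are precomputable).

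Next I would verify the two properties the generic scheme requires. First, $\mathcal{S}_\Phi\subseteq\mathcal{U}_{\clset}$: every satisfying assignment of $\Phi$ satisfies every clause of $\Phi$, hence satisfies each $\sigma\in\clset$ and so lies in $\mathcal{U}_{\clset}$. Second, $|\mathcal{U}_{\clset}|$ is known exactly and factors cleanly: by independence,
$$
|\mathcal{U}_{\clset}|=2^{\,n-\sum_{\sigma}n_\sigma}\cdot\prod_{\sigma\in\clset}L_\sigma
=2^n\cdot\prod_{\sigma\in\clset}\frac{L_\sigma}{2^{n_\sigma}}\enspace.
$$
With these two facts, the result of Sec.~\ref{subsec:MC} applies verbatim: a sample size of $T=\Theta(\varepsilon^{-2}\cdot|\mathcal{U}_{\clset}|/|\mathcal{S}_\Phi|)$ guarantees that $\frac{L}{T}\cdot|\mathcal{U}_{\clset}|$ is an $\varepsilon$-approximation of $\#\Phi=|\mathcal{S}_\Phi|$ with probability at least $\tfrac34$, which the median-of-means amplification boosts to $1-\delta$ at the cost of the $\log(\delta^{-1})$ factor.

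For the running time I would substitute the hypothesis $\#\Phi\ge\ell$, i.e. $|\mathcal{S}_\Phi|\ge\ell$, into the sample-size bound to get $T=O^*(\varepsilon^{-2}\cdot|\mathcal{U}_{\clset}|/\ell)$, then plug in the product formula for $|\mathcal{U}_{\clset}|$ above. Multiplying by the per-sample cost (polynomial, absorbed into $O^*$) and the $\Theta(\log\delta^{-1})$ repetitions yields exactly the claimed
$$
T_{\rm MC}=O^*\!\Big(\log(\delta^{-1})\cdot\frac{2^n}{\varepsilon^2\cdot\ell}\cdot\prod_{\sigma\in\clset}\frac{L_\sigma}{2^{n_\sigma}}\Big)\enspace.
$$

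The only genuinely delicate point—everything else is bookkeeping—is justifying that the described sampling procedure produces the \emph{uniform} distribution on $\mathcal{U}_{\clset}$, since the correctness of the Monte Carlo estimator hinges on it. This is where pairwise independence is essential: because the variable sets of distinct structures and of the free variables are disjoint, the joint distribution factors as a product of independent uniform choices, and that product is precisely the uniform distribution on the product set $\mathcal{U}_{\clset}$. I would state this factorization explicitly as the crux of the argument; the sample-size and amplification steps then follow directly by invoking Sec.~\ref{subsec:MC} without reproving the underlying Chernoff-type bound.
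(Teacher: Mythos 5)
Your proposal is correct and follows exactly the paper's own argument: sample uniformly from $\mathcal{U}_{\clset}$ (the assignments satisfying $\clset$), use pairwise independence to compute $|\mathcal{U}_{\clset}| = 2^n\cdot\prod_{\sigma\in\clset}L_\sigma/2^{n_\sigma}$, and invoke the generic Monte Carlo bound of Sec.~\ref{subsec:MC} with $|\mathcal{S}_\Phi|\ge\ell$ plus median-of-means amplification. The paper's proof is just a terser version of this; your added detail on why the sampling is genuinely uniform (the product-distribution factorization over disjoint variable sets) is the same idea the paper leaves implicit.
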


\begin{proof}
We sample from the set of all assignments satisfying $\clset$.
The size of this set is 
$$
2^{n-\sum_{{\sigma}\in{\clset}} n_{\sigma}}\cdot\prod_{{\sigma}\in{\clset}} L_{\sigma}
= 2^n\cdot \prod_{{\sigma}\in{\clset}}\frac{L_{\sigma}}{2^{n_{\sigma}}}
\enspace.
$$
The result follows directly from Sec.~\ref{subsec:MC}.
\end{proof}

Lemma \ref{MC} generalizes the result of Sec.~\ref{subsec:MCspeed},
where we used independent clauses, to the new situation, enabling the
use of more complex, but still pairwise independent subformulas.
The same generalization can be applied to the cut phase from 
Sec.~\ref{sec:FirstImpr}.

\begin{algorithm}[t]

\caption{MC$(\Phi,\clset,\ell,\varepsilon,\delta)$}
\label{alg:MC}

\SetKwInOut{Input}{input}
\SetKwInOut{Output}{output}

\Input{$k$-SAT instance $\Phi$,
set $\clset=\{\sigma_1,\ldots,\sigma_{\hat{m}}\}$ of independent subformulas,
parameters $\ell$, $\varepsilon$, $\delta$}
\Output{w.\,p. $1-\delta$: $\varepsilon$-approximation on $\#\Phi$}

\BlankLine

$\displaystyle U:=2^n\cdot\prod_{\sigma\in{\clset}} \frac{L_{\sigma}}{2^{n_{\sigma}}}$;\quad
$\displaystyle T:=\log(\delta^{-1})\cdot\frac{U}{\varepsilon^2\cdot\ell}$;\quad
$L:=0$\;
\SetKwFor{myRepeat}{repeat}{}{end}
\SetArgSty{}
\myRepeat{$T$ times}{
\SetArgSty{emph}
  \ForAll{$\sigma \in \clset$}{
    assign u.\,a.\,r. one of the assignments satisfying $\sigma$ to its variables\;
  }
  \ForAll{$x \in \operatorname{Vars}(\Phi)$ not already fixed}{
    assign u.\,a.\,r. $0$ or $1$ to $x$\;
  }
  \If{$\Phi$ is satisfied}{
    $L := L+1$\;
  }
}
\Return $L/T\cdot U$\;
\end{algorithm}

\begin{algorithm}
\caption{\textsc{Cut}$(\Phi,\clset,\ell,\delta)$ // Note: $L$ is a global variable}
\label{alg:Cut}

\SetKwInOut{Input}{input}
\SetKwInOut{Output}{output}

\Input{$k$-SAT instance $\Phi$, set of subformulas $\clset$, parameters $\ell$, $\delta$}
\Output{w.\,p. $1-\delta$: either $\#\Phi$ exactly, or the message that $\#\Phi\ge\ell$}

\BlankLine

\eIf{$\Phi$ is unsatisfiable (w.\,p. at least $1-\delta/2^n$)}{
  \Return $L$\;
}{
  $L := L+1$\;
  \If{$L\ge \ell$}{
    global abort\;
  }
}
\eIf {${\clset}\ne\emptyset$}{
  choose the first $\sigma\in\clset$\;
}{
  choose $\sigma \in \Phi$\;
}

\ForAll{Assignment $b$ to $\operatorname{Vars}(\sigma)$ satisfying $\sigma$}{
  \textsc{Cut}($\Phi_b$, ${\clset} \setminus\{\sigma\}$, $\ell$, $\delta$)\;
}
\Return $L$\;
\end{algorithm}

An important difference is that in Sec.~\ref{sec:FirstImpr},
we could assume that for each node $\phi$ there always is a clause
to generate new nodes since otherwise calculating $\#\phi$ would be trivial.
Now it may happen that
$\clset$ becomes empty. In this case, we fall 
back to using single
clauses, but due to the maximality of $\clset$,
those clauses have length at most $k-1$.
The following lemma states the running time of \textsc{Cut} 
in both cases.

\begin{lem}\label{Cut}
If $\#\Phi\le\ell$,
then, with probability at least $1-\delta$,
\textsc{Cut}($\Phi, \clset, \ell, \delta$) returns $\#\Phi$ exactly.
If $\ell\le\prod_{\sigma \in {\clset}} L_{\sigma}$, let ${\clset}'=\{\sigma_1,\ldots,\sigma_{\hat{m}'}\}\subset{\clset}$ denote
a subset of $\clset$, where $\hat{m}'$ is chosen such that $\prod_{i=1}^{\hat{m}'-1} L_{\sigma_i}<\ell\le\prod_{i=1}^{\hat{m}'} L_{\sigma_i}$.
If $\ell \ge \prod_{{\sigma} \in {\clset}} L_{\sigma}$, set $\hat{m}' := \hat{m}+\log_{(2^{k-1}-1)}(\ell \cdot \prod_{{\sigma} \in {\clset}} L_{\sigma}^{-1})$.\footnote{Roughly speaking, $\hat{m}'$ is the minimum number of elements of
$\clset$ respectively additional clauses
that have to be chosen
by \textsc{Cut} until the total of $\ell$ leaves of the elimination
tree can be achieved.}
Then the running time of \textsc{Cut} is
\begin{align*}
T_{\rm Cut} &=
\smash{O^*\left(\log(\delta^{-1})\cdot 2^{\beta_k\cdot n}\cdot
   \prod_{{\sigma} \in {\clset}'} \frac{L_{\sigma}}{2^{\beta_k\cdot n_{\sigma}}}\right)},\\
\intertext{if $\ell\le\prod_{\sigma \in {\clset}} L_{\sigma}$, and}
T_{\rm Cut} &= \smash{O^*\left(\log(\delta^{-1})\cdot 2^{\beta_k\cdot n}\cdot\left(\frac{2^{k-1}-1}{2^{\beta_k\cdot (k-1)}}\right)^{\hat{m}'-\hat{m}}\cdot\displaystyle \prod_{{\sigma} \in {\clset}} \frac{L_{\sigma}}{2^{\beta_k\cdot n_{\sigma}}}\right)}
\end{align*}
otherwise.
\end{lem}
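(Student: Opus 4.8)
The plan is to view \textsc{Cut} as a depth-first traversal of the pruned elimination tree built by branching first on the structures of $\clset$ in order and, once $\clset$ is exhausted, on the short residual clauses that maximality guarantees to have length at most $k-1$. I would then treat the correctness claim and the two running-time bounds separately, since they rest on different ingredients (a union bound over the satisfiability tests versus a level-by-level accounting of their cost).

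For correctness I would first condition on the event that \emph{every} satisfiability test performed during the traversal returns the right answer. Each test is run by the randomized $k$-SAT decision procedure amplified to per-call error at most $\delta/2^{n}$; since every node is a distinct partial assignment, every satisfiable node spawns at most $2^{k}-1$ children, and the depth is at most $n$, the traversed tree has fewer than $2^{n+1}$ nodes (in fact $(2^k-1)^{n/k}<2^n$ up to a polynomial factor), so a union bound makes all tests simultaneously correct with probability at least $1-\delta$ after absorbing the constant into the per-call error. The amplification to error $\delta/2^{n}$ costs $O(n+\log\delta^{-1})=O^*(\log\delta^{-1})$ repetitions of the base procedure, which is exactly the $\log(\delta^{-1})$ factor in the running time. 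Conditioned on this event and on $\#\Phi\le\ell$, the number of satisfiable leaves found stays within $\ell$, so the traversal is never aborted before it has visited all of them; the explored tree is then precisely the sub-tree of satisfiable nodes of the pruned elimination tree, whose leaves correspond bijectively to the satisfying assignments of $\Phi$, and \textsc{Cut} returns their count $\#\Phi$.

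For the running time I would decompose the cost by level. After branching on $\sigma_1,\ldots,\sigma_i$ there are at most $\prod_{i'\le i}L_{\sigma_{i'}}$ nodes, each a formula on $n-\sum_{i'\le i}n_{\sigma_{i'}}$ free variables, so one test there costs $O^*\big(2^{\beta_k(n-\sum_{i'\le i}n_{\sigma_{i'}})}\big)$ and the whole level costs $O^*\big(2^{\beta_k n}\prod_{i'\le i}L_{\sigma_{i'}}/2^{\beta_k n_{\sigma_{i'}}}\big)$; dead (satisfiable-to-unsatisfiable) branches only prune subtrees, so the worst case is full branching down to the shallowest level at which $\ell$ leaves appear. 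Since branching on the first $j$ structures yields $\prod_{i'\le j}L_{\sigma_{i'}}$ leaves, that level is exactly the index $\hat m'$ of the first case. When $\ell>\prod_{\sigma\in\clset}L_\sigma$ all of $\clset$ is consumed and the traversal continues on single residual clauses, which by maximality have length at most $k-1$; each such branch multiplies the leaf count by at most $2^{k-1}-1$ while fixing $k-1$ variables, so $\hat m'-\hat m=\log_{(2^{k-1}-1)}(\ell\cdot\prod_{\sigma}L_\sigma^{-1})$ further branches are needed (up to rounding, absorbed by $O^*$), contributing the factor $\big((2^{k-1}-1)/2^{\beta_k(k-1)}\big)^{\hat m'-\hat m}$.

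The crux is to show that the sum of the per-level costs equals, within the $O^*$ slack, the cost of the deepest explored level, which is what turns the level sum into the two displayed products. I would argue that the per-level growth factors, $L_\sigma/2^{\beta_k n_\sigma}$ for the structures and $(2^{k-1}-1)/2^{\beta_k(k-1)}$ for the residual clauses, all exceed $1$, so the per-level cost is nondecreasing and the whole sum is bounded by $\hat m'\le n$ times the last level, the factor $n$ being hidden by $O^*$. This positivity is the main obstacle: for a $(k-1)$-clause it is a one-line numerical check, but for a general \struct{} it amounts to requiring $L_\sigma\ge 2^{\beta_k n_\sigma}$, i.e.\ that every $\sigma\in\clset$ carries enough satisfying assignments — a property that must be secured by the construction of the structures in the following subsection rather than assumed for arbitrary subformulas. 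Granting it, multiplying the deepest-level cost by the $\log(\delta^{-1})$ amplification factor yields the first bound when the $\ell$-th leaf is reached inside $\clset$ and the second when residual clauses are used.
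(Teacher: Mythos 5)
Your correctness argument (a union bound over the at most $2^n$ satisfiability tests, each amplified to per-call error $\delta/2^n$, costing the $O^*(\log\delta^{-1})$ factor) and your level-by-level cost decomposition are exactly the paper's route, as is the identification of $\hat m'$ as the crossing level and the reliance on $L_\sigma/2^{\beta_k n_\sigma}>1$ (which the paper secures the same way: the \struct{}s actually used satisfy it, and otherwise the runtime only improves). The gap is in the step that collapses the level sum onto the single level $\hat m'$. You claim that ``dead branches only prune subtrees, so the worst case is full branching down to the shallowest level at which $\ell$ leaves appear,'' and then bound the sum by $\hat m'\le n$ times the cost of that level. This is not sound: \textsc{Cut} aborts only when the global counter of \emph{satisfiable} nodes reaches $\ell$, and unsatisfiable nodes do not increment it. So when many nodes at level $\hat m'$ are unsatisfiable, the counter stays below $\ell$ and the recursion continues into levels $\hat m'+1,\hat m'+2,\ldots$ --- exploration does not stop at level $\hat m'$, and those deeper levels are simply absent from your accounting. (Even under full branching, the depth-first order reaches the leaves before backtracking, so ``the deepest explored level'' is not $\hat m'$.) Your monotonicity argument makes the per-level cost \emph{nondecreasing}, so by itself it could only bound the sum by the deepest level actually explored, which can be the full depth $h$, not $\hat m'$; the claim that full branching is the worst case is precisely what would need proof, and it is false as stated.

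The missing ingredient, which is how the paper closes this, is a second per-level bound $N_i\le\ell$ (up to the constant branching factor), valid at \emph{every} level because each visited node's parent is satisfiable and at most $\ell$ satisfiable nodes are ever counted before the global abort. Each level cost is then the minimum of two quantities: your full-branching product, which is increasing in $i$, and $\ell$ times the per-node test cost, which decreases geometrically in $i$ by the factor $2^{-\beta_k n_{\sigma_i}}$ (resp.\ $2^{-\beta_k (k-1)}$ on the residual-clause levels). Levels up to $\hat m'$ are controlled by the first quantity, levels beyond $\hat m'$ by the second, and both halves are dominated by the crossing term at $i=\hat m'$, up to a factor $h\le n$ absorbed in $O^*$. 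With this cap added your argument goes through and yields both displayed bounds; without it, everything the algorithm does below level $\hat m'$ is unaccounted for.
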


\begin{proof}
Assume $\#\Phi\le\ell$, so the algorithm searches for a pruned
elimination tree with $\ell$ leaves.
Since $2^n$ is a rigorous bound on the number of formulas
that have to be tested on satisfiability,
the probability that the satisfiability check always gives
the right answer is at least $1-2^n\cdot\delta/2^n = 1-\delta$.

Now we focus on the run-time analysis. Basically, for each node of the tree,
at most $2^k-1$ different $k$-SAT instances must be solved.
So we have to analyze for each node $\phi$, how many variables of $\phi$
are fixed. 
On the $i$-th level of the tree, for $i\le\hat{m}$, the variables
of the first $i$ subformulas
of $\clset$ are fixed. If $i>\hat{m}$, all the variables of $\clset$ and additionally
the variables of $i-\hat{m}$ of the remaining clauses are fixed.
Note that due to the maximality of $\clset$, those clauses have size
at most $k-1$.
Let $N_i$ denote the number of nodes in level $i$. For $i\le \hat{m}$,
the nodes on level $i$ have $\sum_{j=1}^i n_{\sigma_j}$ fixed variables,
so the time $T_i$ required for processing only the nodes on level $i$ is
$$
T_i = O^*\left((n+\log(\delta^{-1}))\cdot N_i\cdot 2^{\beta_k\cdot n-\sum_{j=1}^i n_{\sigma_j}}\right)
 = O^*\left(\log(\delta^{-1})\cdot N_i\cdot 2^{\beta_k\cdot n}\cdot\prod_{j=1}^i 2^{-n_{\sigma_j}}\right).
$$
For $i>\hat{m}$, 
the number of fixed variables is $\sum_{j=1}^{\hat{m}} n_{\sigma_j}+(k-1)\cdot (i-\hat{m})$.
In this case, we obtain 
$$
T_i = O^*\left(\log(\delta^{-1})\cdot N_i\cdot 2^{\beta_k\cdot n}\cdot2^{-(k-1)\cdot(i-\hat{m})}\cdot\prod_{j=1}^{\hat{m}} 2^{-n_{\sigma_j}}\right).
$$
Since a node on level $i-1$ has out-degree at most $L_{\sigma_{i}}$ for $i\le \hat{m}$ and $2^{k-1}-1$ otherwise, on level $i$ there are at most $N_i\le\prod_{j=1}^i L_{\sigma_j}$ nodes if $i\le\hat{m}$ and $N_i\le\left(2^{k-1}-1\right)^{i-\hat{m}}\prod_{j=1}^{\hat{m}} L_{\sigma_j}$ nodes if $i\le\hat{m}$. Of course, $N_i\le \ell$ is also a bound we have for every $N_i$. Let $h$ be the maximum length of a path from the root to a leaf.
Then the overall running time $T_{\rm Cut}$ of \textsc{Cut} is (where we omit the $O^*(.)$ and the factor $\log(\delta^{-1})$)
\begin{align*}
T_{\rm Cut} &= \sum_{i=0}^h T_i = \sum_{i=0}^{\hat{m}} T_i + \sum_{i=\hat{m}+1}^h T_i\\
&\le \sum_{i=0}^{\hat{m}}\min\left\lbrace\ell,\prod_{j=1}^i L_{\sigma_j}\right\rbrace\cdot 2^{\beta_k\cdot n}\cdot\prod_{j=1}^i 2^{-\beta_k\cdot n_{\sigma_j}}\\
&\quad + \sum_{i=\hat{m}+1}^h \min\left\lbrace\ell,\left(2^{k-1}-1\right)^{i-\hat{m}}\prod_{j=1}^{\hat{m}} L_{\sigma_j}\right\rbrace\cdot 2^{\beta_k\cdot n}\cdot 2^{-(k-1)\cdot(i-\hat{m})}\cdot\prod_{j=1}^{\hat{m}} 2^{-n_{\sigma_j}}\\
&= 2^{\beta_k\cdot n}\cdot\sum_{i=0}^{\hat{m}}\min\left\lbrace\ell\cdot\prod_{j=1}^i 2^{-\beta_k\cdot n_{\sigma_j}},\prod_{j=1}^i \frac{L_{\sigma_j}}{2^{\beta_k\cdot n_{\sigma_j}}}\right\rbrace\\
&\quad + 2^{\beta_k\cdot n}\cdot\sum_{i=\hat{m}+1}^h \min\left\lbrace\ell\cdot 2^{-(k-1)\cdot(i-\hat{m})}\cdot\prod_{j=1}^{\hat{m}} 2^{-n_{\sigma_j}},\left(\frac{2^{k-1}-1}{2^{k-1}}\right)^{i-\hat{m}}\prod_{j=1}^{\hat{m}} \frac{L_{\sigma_j}}{2^{n_{\sigma_j}}}\right\rbrace\\
\end{align*}
One can easily see that in both sums, the first term inside the $\min$ is 
decreasing in $i$. Since for any $\sigma$ we are going to use, we have 
$L_{\sigma}/2^{\beta_k\cdot n_{\sigma}}>1$ (and since otherwise the 
runtime of the algorithm would be even better), the second term inside each 
$\min$ is increasing in $i$. So, up to a constant factor, the sum is equal to 
the summand where both terms inside the minimum have the same value, which
is the case for $i=\hat{m}'$. This finishes the proof.
\end{proof}

Note that, in the case of a node on level $i>\hat{m}$, \textsc{Cut}
would not have to solve a $k$-SAT instance anymore but only a
$(k-1)$-SAT instance.
However, such considerations would lead to
only a small improvement of the running time, so we
sacrificed the benefit from this observation in order to
simplify our analysis.

\subsection{Finding large sets of independent subformulas}
\label{sec:struct}

It remains to provide some method that collects the set $\clset$ of subformulas.
We call these subformulas \struct{}s and, in order to compute them,
we generalize algorithm \textsc{Red} from Sec.~\ref{subsec:Red}
as follows.
We start with the $\hat{m}$ independent clauses as initial set of \struct{}s.
Now we search iteratively for further clauses in $\Phi$ that extend in a controlled way
the \struct{}s we already have such that they remain independent.
Since the \struct{}s must have constant size, there are \struct{}s that we
do not want to extend anymore.
We call a variable $x$ occurring in \struct{} $\sigma$ \final{}, if we do not want to
add further clauses to $\sigma$ that contain $x$.
$\sigma$~is called \final{} if it has 
only \final{} variables.
For example, assume we already found the two \struct{}s
$\sigma_1=\{(x_1\lor x_2), (\bar x_2\lor x_3)\}$ and $\sigma_2=\{(x_4\lor x_5)\}$ where $x_2$ and $x_5$ are \final{} variables. 
The clause $(x_2\lor x_6)\in\Phi$ would not be considered because it contains the \final{} variable $x_2$.
But the clause $(x_3\lor x_4)\in\Phi$ can be used to extend and connect the two \struct{}s to
the single \struct{}
$\sigma_3=\{(x_1\lor x_2),
(\bar x_2\lor x_3),(x_3\lor x_4),(x_4\lor x_5)\}$ which is, in our example, a \final{} \struct{}.

Of course, there is no guarantee that the extension phase runs until every \struct{} is \final{}.
In the above example,
without the clause $(x_3\lor x_4)$ 
only the two non-\final{} \struct{}s can be obtained.
But if we ensure that each \struct{} 
has at least one \final{} variable in every clause, the set $\clset$ is maximal, meaning that every clause in $\Phi$ contains at least one variable that is a \final{} variable of some \struct{} $\sigma \in\clset$.
So, in order to reduce $\Phi$ to a $\#(k-1)$-SAT instance, one only
has to fix the \final{} variables. If \textsc{Red} finds a large number of \struct{}s, then
MC (Algorithm \ref{alg:MC})
runs faster. If there are only a few, then fixing their \final{} variables leads to only a few $\#(k-1)$-SAT
instances to be solved.
For a given \struct{} $\sigma$, let $f_\sigma$ denote the number of \final{} variables and $w_\sigma$
the number of assignments to the \final{} variables that do not already violate a clause of $\sigma$, 
i.\,e., $f_\sigma=n_\sigma$ and $w_\sigma=L_\sigma$ for every \final{} \struct{}.

The algorithm for grouping the \struct{}s is given in
Algorithm~\ref{alg:Red}. $\alpha_{k}$ refers to
a constant determined in the proof of Theorem~\ref{thm:DREI} below.
One can think of it as the value such 
that for the $\#k$-SAT problem an $\varepsilon$-approximation can be obtained in
time $O^*(\varepsilon^{-2}\cdot \log(\delta^{-1})\cdot \alpha_{k}^n)$
with probability at least $1-\delta$.
The algorithm accesses a library, containing for every
occurring \struct{} $\sigma$ the \final{} variables.
This library assures that the size of the \struct{}s is constant.

\begin{algorithm}[htb]

\caption{\textsc{Red}$(\Phi,\ell,\varepsilon,\delta)$}
\label{alg:Red}

\SetKwInOut{Input}{input}
\SetKwInOut{Output}{output}

\Input{$k$-SAT instance $\Phi$, parameters $\ell$, $\varepsilon$, $\delta$}
\Output{either a set $\clset$ of \struct{}s or, w.\,p. $1-\delta$: 
$\varepsilon$-approximation $L$ of $\#\Phi$}

\BlankLine

${\clset}:=\emptyset$\;
\While{there is a clause $C\in\Phi$ that contains no \final{} variable of a \struct{} in $\clset$}{
  $\chi:=$ all \struct{}s of $\clset$ that have a variable with $C$ in common\;
  $\sigma:=$ new \struct{} created from $C$ and the \struct{}s from $\chi$\; 
  ${\clset}:=({\clset}-\chi)\cup\{\sigma\}$\;
}\tcc{$\alpha_k$: const. determined in Thm.~\ref{thm:DREI}}
\eIf{$\alpha_{k-1}^n\cdot\prod_{\sigma\in{\clset}}w_\sigma\cdot \alpha_{k-1}^{-f_\sigma}< \alpha_k^n$}{
  $L:=0$\;
  \ForAll{assignments $b$ to the variables in $\clset$ that satisfy $\clset$}{
    $L:=L+\textsc{IndepSubform\_RAS}(\Phi_b,\varepsilon,\delta/2^n)$\;
  }
  \Return $L$\;
}{
  \Return $\clset$\;
}
\end{algorithm}

\begin{lem}\label{Red}
Assuming $\#(k-1)$-SAT can be approximated in time $O^*(\alpha_{k-1}^n)$, \textsc{Red}$(\Phi,\ell,\varepsilon,\delta)$ (Algorithm \ref{alg:Red})
returns either w.\,p. at least $1-\delta$ an $\varepsilon$-approxima\-tion of $\#\Phi$
or a set $\clset$ of pairwise independent \struct{}s
 such that $\alpha_{k-1}^n\cdot\prod_{\sigma\in{\clset}}w_\sigma/\alpha_{k-1}^{f_\sigma}\ge \alpha_k^n$.
It runs in time
$$
O^*\left(\varepsilon^{-2}\cdot\log(\delta^{-1})\cdot \alpha_{k-1}^n\cdot\prod_{\sigma\in{\clset}}\frac{w_\sigma}{\alpha_{k-1}^{f_\sigma}}\right).
$$
\end{lem}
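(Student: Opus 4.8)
The plan is to analyze Algorithm~\ref{alg:Red} in two parts, matching its two-branch structure: first the correctness and running-time of the \emph{extension phase} (the \texttt{while} loop), and then the two outcomes of the \texttt{eIf} test that decides between recursion and returning $\clset$.

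\emph{Correctness of the loop invariant.} First I would argue that the \texttt{while} loop maintains the invariant that $\clset$ is always a set of pairwise independent \struct{}s. When a clause $C$ with no \final{} variable is found, all \struct{}s $\chi$ sharing a variable with $C$ are merged with $C$ into a single new \struct{} $\sigma$, and the old ones are removed; since any two \struct{}s that could share a variable are exactly those collected into $\chi$, the resulting collection stays pairwise independent. The loop terminates because each clause of $\Phi$ can trigger at most one extension (after which one of its variables becomes \final{}, by the library convention that each clause gets a \final{} variable), so there are at most $m$ iterations. Upon termination, every clause of $\Phi$ contains a \final{} variable of some $\sigma\in\clset$, which is precisely the maximality property needed so that fixing all \final{} variables turns $\Phi$ into a $\#(k-1)$-SAT instance.

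\emph{The two output branches.} For the branch where the test $\alpha_{k-1}^n\cdot\prod_{\sigma}w_\sigma\cdot\alpha_{k-1}^{-f_\sigma}<\alpha_k^n$ succeeds, I would show the algorithm recurses correctly: fixing the \final{} variables to each of the $\prod_\sigma w_\sigma$ admissible assignments yields $(k-1)$-CNFs $\Phi_b$, each solved by the nested \textsc{IndepSubform\_RAS} with confidence $\delta/2^n$; summing the estimates gives an estimate of $\#\Phi=\sum_b \#\Phi_b$. The error bound follows because each call is an $\varepsilon$-approximation, so their sum is too, and the union bound over at most $2^n$ calls gives overall confidence $1-\delta$. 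Here I would invoke the assumed $O^*(\alpha_{k-1}^n)$ time for each $\#(k-1)$-SAT call on the $n-\sum_\sigma f_\sigma$ free variables, multiply by the number of calls $\prod_\sigma w_\sigma$, and collect factors to obtain the stated running time $O^*(\varepsilon^{-2}\log(\delta^{-1})\,\alpha_{k-1}^n\prod_\sigma w_\sigma/\alpha_{k-1}^{f_\sigma})$. The complementary branch simply returns $\clset$; the test failing is exactly the inequality $\alpha_{k-1}^n\cdot\prod_\sigma w_\sigma/\alpha_{k-1}^{f_\sigma}\ge\alpha_k^n$ asserted in the lemma, so no further work is needed there.

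\emph{Main obstacle.} The delicate point is bookkeeping the exponents so that the running-time expression comes out in the claimed product form. Each recursive call costs $\alpha_{k-1}$ raised to the number of \emph{free} variables, namely $n-\sum_\sigma f_\sigma$, contributing $\alpha_{k-1}^{n}\prod_\sigma\alpha_{k-1}^{-f_\sigma}$, while the number of calls contributes $\prod_\sigma w_\sigma$; the task is to verify these combine cleanly and that the $\varepsilon^{-2}$ and $\log(\delta^{-1})$ factors (the latter inflated to handle the $2^n$-fold union bound, absorbed into $O^*$) appear exactly as stated. I would also need to confirm that the library guarantee keeps each \struct{} of constant size, so that computing $w_\sigma$ and $L_\sigma$ per \struct{} costs only $O^*(1)$ and does not disturb the bound.
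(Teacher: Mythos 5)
Your proposal is correct and follows essentially the same route as the paper's own proof: the same two-branch case analysis of the \textbf{eIf} test, the same exact count of $\prod_{\sigma\in\clset} w_\sigma$ recursive calls justified by pairwise independence, the same union bound with per-call confidence $\delta/2^n$ (giving $1-\delta/2^n\cdot\prod_{\sigma\in\clset}w_\sigma\ge 1-\delta$), and the same per-call cost $O^*\bigl(\varepsilon^{-2}\cdot(n+\log(\delta^{-1}))\cdot\alpha_{k-1}^{n-\sum_{\sigma\in\clset} f_\sigma}\bigr)$ multiplied by the number of calls to yield the stated bound. The extra details you supply (the independence-preserving loop invariant, termination within $m$ iterations, and additivity of $\varepsilon$-approximations over the partition $\#\Phi=\sum_b\#\Phi_b$) are all consistent with the paper, which compresses them into the remark that the while-loop ``obviously runs in polynomial time.''
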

\begin{proof}
If 
$\alpha_{k-1}^n\cdot\prod_{\sigma\in{\clset}}\frac{w_\sigma}{\alpha_{k-1}^{f_\sigma}}>t$
after the \textbf{while}-loop, which obviously runs in polynomial time, is finished,
then the algorithm returns
the set $\clset$ with the claimed property. Otherwise it enumerates 
all assignments for the \final{} variables of the \struct{}s of $\clset$ that do not already cause $\clset$ to be evaluated to $0$. Since the \struct{}s in $\clset$ are pairwise independent, there are exactly $\prod_{\sigma\in{\clset}}w_\sigma$ such assignments.
For each of these assignments, \textsc{Red} starts an
algorithm that w.\,p. at least $1-\delta/2^n$ returns an $\varepsilon$-approximation of the
resulting $\#(k-1)$-SAT instances. The probability that each of the $\prod_{\sigma\in{\clset}}w_\sigma$ runs actually returns an $\varepsilon$-approximation is therefore at least
$1-\delta/2^n\cdot\prod_{\sigma\in{\clset}}w_\sigma\ge 1-\delta$. Since $\sum_{\sigma\in\clset} f_\sigma$ variables are fixed, each of the $\#(k-1)$-SAT instances has $n-\sum_{\sigma\in\clset} f_\sigma$ variables and to approximate it within the given parameters takes time 
$$
O^*\left(\varepsilon^{-2}\cdot(n+\log(\delta^{-1}))\cdot \alpha_{k-1}^{n-\sum_{\sigma\in\clset} f_\sigma}\right) =
O^*\left(\varepsilon^{-2}\cdot\log(\delta^{-1})\cdot \alpha_{k-1}^n \cdot \prod_{\sigma\in\clset} \alpha_{k-1}^{-f_\sigma}\right).
$$
This finishes the proof.
\end{proof}

\subsection{All things come together: The new randomized approximation scheme}
\label{subsec:HierIsses}

Combining all results, 
we are now able to state our main algorithm that solves
$\#k$-SAT.
\textsc{IndepSubform\_RAS}
(Algorithm \ref{alg:RAS}) shows how to combine the Algorithms \ref{alg:MC}, \ref{alg:Cut} and \ref{alg:Red}.

\begin{algorithm}[htb]

\caption{\textsc{IndepSubform\_RAS}}
\label{alg:RAS}

\SetKwInOut{Input}{input}
\SetKwInOut{Output}{output}

\Input{$k$-CNF $\Phi$, $\varepsilon$, $\delta$}
\Output{Approximated number $L$ of satisfying assignments of $\Phi$}

\BlankLine

\eIf {\textsc{Red}$(\Phi$, $\ell$, $\varepsilon$, $\delta)$ returns $L$}{ 
  \Return $L$
}{
  $\clset:=$ set of \struct{}s returned by \textsc{Red}\;
  \eIf {\textsc{Cut}$(\Phi$, $\clset$, $\ell$, $\delta)$ returns value $\#\Phi<\ell$}{
    \Return $\#\Phi$\;
  }{
    \Return MC($\Phi$, $\clset$, $\ell$, $\varepsilon$, $\delta$)\;
  }
}
\end{algorithm}

\begin{theorem}\label{thm:DREI}
\textsc{IndepSubform\_RAS} (Algorithm \ref{alg:RAS}) is a RAS
running in time $O(\varepsilon^{-2}\cdot\log(\delta^{-1})\cdot\theConstdrei^n)$ for
$\#3$-SAT and in time $O(\varepsilon^{-2}\cdot\log(\delta^{-1})\cdot \theConstvier^n)$ for
$\#4$-SAT.
\end{theorem}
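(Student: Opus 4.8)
The plan is to prove the two assertions of Theorem~\ref{thm:DREI}---the RAS (correctness) property and the running-time bounds---separately, obtaining the running time by induction on $k$ with base case $k=2$ handled by Wahlstr\"om's exact $\#2$-SAT algorithm in time $O(1.2377^n)$, i.e., $\alpha_2=1.2377$. For correctness I would trace the three branches of \textsc{IndepSubform\_RAS} and invoke Lemmas~\ref{MC}, \ref{Cut}, and \ref{Red} directly. If \textsc{Red} returns a value, Lemma~\ref{Red} already guarantees an $\varepsilon$-approximation with probability $1-\delta$. Otherwise \textsc{Red} returns a valid set $\clset$ of pairwise independent \struct{}s, and the algorithm relies on the dichotomy of Lemma~\ref{Cut} (if $\#\Phi\le\ell$, \textsc{Cut} returns $\#\Phi$ exactly w.p. $1-\delta$) and Lemma~\ref{MC} (if $\#\Phi\ge\ell$, MC returns an $\varepsilon$-approximation w.p. $1-\delta$). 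A union bound over the $O(1)$ phases, together with the per-recursive-call amplification to $1-\delta/2^n$ already built into \textsc{Red}, keeps the total failure probability below $\delta$; the desired constant $3/4$ is then immediate, and the $\log(\delta^{-1})$ factor in the running time accounts for median-of-means boosting.

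For the running time I would set up a three-way balance. In the recursion branch of \textsc{Red}, the guard $\alpha_{k-1}^n\cdot\prod_{\sigma\in\clset} w_\sigma\,\alpha_{k-1}^{-f_\sigma}<\alpha_k^n$ is exactly the condition under which Lemma~\ref{Red} gives running time $O^*(\varepsilon^{-2}\log(\delta^{-1})\,\alpha_{k-1}^n\cdot\prod_\sigma w_\sigma\alpha_{k-1}^{-f_\sigma})=O^*(\varepsilon^{-2}\log(\delta^{-1})\,\alpha_k^n)$, so this branch is bounded by construction. When \textsc{Red} instead returns $\clset$, the complementary inequality $\alpha_{k-1}^n\cdot\prod_\sigma w_\sigma\alpha_{k-1}^{-f_\sigma}\ge\alpha_k^n$ holds, and it remains to bound $T_{\rm Cut}$ and $T_{\rm MC}$. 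I would choose $\ell$ to balance these two phases: since $T_{\rm MC}\propto \ell^{-1}\,2^n\prod_\sigma L_\sigma/2^{n_\sigma}$ is decreasing in $\ell$ while $T_{\rm Cut}$ is increasing in $\ell$ (in the first regime through the growing prefix $\clset'$, in the second through $\hat m'-\hat m$, using $2^{k-1}-1>2^{\beta_k(k-1)}$ for $k\in\{3,4\}$), a unique balancing choice exists, and at that choice both equal a common value expressible through the \struct{} profile.

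The crux is to define $\alpha_k$ as the smallest constant for which this common value is $O^*(\alpha_k^n)$ simultaneously for \emph{every} \struct{} profile allowed by the returned-$\clset$ inequality, and to compute it. Because the library keeps every \struct{} of constant size, there are only finitely many \struct{} types, so after taking logarithms the worst-case profile is the optimum of a finite linear program: maximize the (logarithmic) balanced Cut/MC cost over nonnegative multiplicities $z_\tau$ of each type $\tau$, subject to $\sum_\tau z_\tau n_\tau\le n$ and $\sum_\tau z_\tau\log(w_\tau/\alpha_{k-1}^{f_\tau})\ge n\log(\alpha_k/\alpha_{k-1})$. This couples back to $\alpha_k$, so $\alpha_k$ is the fixed point at which the worst-case balanced cost equals $\alpha_k^n$.

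I expect the main obstacle to be exactly this self-reference: showing the fixed point is well defined (monotonicity of the LP optimum in $\alpha_k$, so that a unique crossing with $\alpha_k^n$ exists) and verifying the side condition $L_\sigma/2^{\beta_k n_\sigma}>1$ assumed in Lemma~\ref{Cut} for every type that could be extremal---any type violating it only improves the running time and so cannot be worst-case. Evaluating the LP numerically, using $\alpha_2=1.2377$ and $\beta_3=0.3864$ for $k=3$, and then the resulting $\alpha_3$ together with $\beta_4=0.5548$ for $k=4$, yields $\alpha_3=\theConstdrei$ and $\alpha_4=\theConstvier$, which completes the induction and establishes both stated running times.
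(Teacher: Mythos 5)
Your high-level plan does mirror the paper's proof: correctness by tracing the three branches through Lemmas~\ref{MC}, \ref{Cut}, \ref{Red}; running time by balancing the three bounds, with the recursion branch bounded by construction via the guard in Algorithm~\ref{alg:Red}; induction on $k$ with Wahlstr\"om's $\#2$-SAT algorithm ($\alpha_2=1.2377$) as base case; and numeric evaluation of the worst case. Your LP over \struct{}-type multiplicities is in fact a cleaner formalization of what the paper compresses into ``calculating the break-even points,'' and the paper does exhibit the optimum as a concrete profile (for $k=3$: $\ell=1.28794^n$ and $0.05252\,n$ \final{} \struct{}s of a specific four-clause shape).

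There is, however, a genuine gap: you never specify the library, i.e., which \struct{} types are permitted to remain non-\final{} and which of their variables are declared \final{}. This is not an analysis detail that can be abstracted away --- it is a design component of the algorithm that the proof itself must supply, and it is exactly what the paper's proof consists of (Tables~\ref{tab} and~\ref{tab2}). The library determines (i) the finite set of types your LP ranges over, (ii) the parameters $w_\sigma, f_\sigma, L_\sigma, n_\sigma$ entering all three lemmas and the guard of Algorithm~\ref{alg:Red}, and (iii) the maximality property that every clause of every \struct{} contains a \final{} variable, on which both the $(k-1)$-SAT reduction in \textsc{Red} and the second regime of Lemma~\ref{Cut} depend. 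Your claim that ``there are only finitely many \struct{} types'' is true only once these designation rules are fixed, and the fixed point $\alpha_k$ of your LP is a function of them: a cruder but equally legitimate library (e.g., declaring every single clause immediately \final{}) satisfies all structural requirements you state, yet yields a strictly worse constant --- the paper notes this variant still beats Thurley but does not reach $\theConstdrei$. Consequently your final step, ``evaluating the LP numerically \ldots yields $\alpha_3=\theConstdrei$ and $\alpha_4=\theConstvier$,'' cannot be carried out from what you have written; the missing idea is the optimization over \final{}-variable designations (e.g., for $k=3$, keeping only $x_3$ \final{} in a lone clause $(x_1\lor x_2\lor x_3)$ so it can later be extended through $x_1,x_2$), which is precisely the content of the paper's proof.
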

\begin{proof}
The results follow from calculating the break-even points of the time bounds in
Lemma \ref{MC}, Lemma \ref{Cut} and Lemma \ref{Red}. For our version of algorithm \textsc{Red}, we
used the \struct{}s and made the decisions about which variable to declare \final{}
as described in Table \ref{tab} and \ref{tab2}, resp. We declared every other \struct{} that
is not listed as \final{} (by setting all its variables \final{}) and set in Algorithm \ref{alg:Red} $\alpha_3=\theConstdrei$ for $k=3$ and 
$\alpha_4=\theConstvier$ for $k=4$. 

\begin{table}[htb]
\noindent\begin{minipage}[t]{0.49\textwidth}
\caption{Non-\final{} \struct{}s for $k=3$}
\medskip\centering
\begin{tabular}{|l|l|l|}
\hline
Type $\sigma$ & $L_\sigma$ & \final{} Vars.\\
\hline
\hline
$\{(x_1 \lor x_2 \lor x_3)\}$ & $7$ & $\{x_3\}$\\
\hline
$\{(x_1 \lor x_2 \lor x_3),$ & \multirow{2}{0.8cm}{$25$} & \multirow{2}{0.8cm}{$\{x_1\}$}\\
$(x_1\lor x_4\lor x_5)\}$ & &\\
\hline
$\{(x_1 \lor x_2 \lor x_3),$ & \multirow{2}{0.8cm}{$13$} & \multirow{2}{0.8cm}{$\{x_1\}$}\\
$(x_1\lor x_2\lor x_4)\}$ & &\\
\hline
$\{(x_1 \lor x_2 \lor x_3),$ & \multirow{3}{0.8cm}{$89$} & \multirow{3}{0.8cm}{$\{x_1,x_2\}$}\\
$(x_1\lor x_4\lor x_5),$ & &\\
$(x_2\lor x_6\lor x_7)\}$ & & \\
\hline
\end{tabular}
\label{tab}
\end{minipage}\hfill%
%
\begin{minipage}[t]{0.49\textwidth}
\caption{Non-\final{} \struct{}s for $k=4$}
\medskip\centering
\begin{tabular}{|l|l|l|}
\hline
Type $\sigma$ & $L_\sigma$ & \final{} Vars.\\
\hline
\hline
$\{(x_1 \lor x_2 \lor x_3 \lor x_4)\}$ & $15$ & $\{x_4\}$\\
\hline
$\{(x_1 \lor x_2 \lor x_3 \lor x_4),$ & \multirow{2}{0.8cm}{$113$} & \multirow{2}{0.8cm}{$\{x_1\}$}\\
$(x_1\lor x_5\lor x_6\lor x_7)\}$ & &\\
\hline
$\{(x_1 \lor x_2 \lor x_3 \lor x_4),$ & \multirow{3}{0.8cm}{$851$} & \multirow{3}{0.8cm}{$\{x_1,x_2\}$}\\
$(x_1\lor x_5\lor x_6\lor x_7),$ & &\\
$(x_2\lor x_8\lor x_9\lor x_{10})\}$ & & \\
\hline
\end{tabular}
\label{tab2}
\end{minipage}
\end{table}

For $k=3$, in the worst case, $\ell = 1.28794^n$ and there are $0.05252\,n$ \final{} \struct{}s of the form 
$
\{(x_1\lor x_2\lor x_3),(x_1\lor x_4\lor x_5),(x_2\lor x_6\lor x_7),(x_4\lor x_8\lor x_9)\}.
$
For $k=4$, in the worst case, $\ell = 1.23823^n$ and there are $0.01785\, n$ \final{} \struct{}s of the form 
$
\{(x_1\lor x_2\lor x_3\lor x_4),(x_1\lor x_5\lor x_6\lor x_7),(x_2\lor x_8\lor x_9\lor x_{10}),(x_5\lor x_{11}\lor x_{12}\lor x_{13})\}.
$
This leads to the claimed bounds.
\end{proof}

Note that, for any $k$, our method runs in time
$O^*(\alpha_k^n)$ with $\alpha_k$ depending on $\beta_k$
and $\alpha_{k-1}$.
For all $k$, $\alpha_k < 2^{1/(2-\beta_k)}=:\vartheta_k$
(Thurley's running time), even
if we define \struct{}s consisting of just a single clause as \final{}.
E.\,g., 
$\alpha_5 \approx 1.6694 < \vartheta_5\approx 1.6712$.



\bibliography{literature}

\end{document}